\documentclass[journal]{IEEEtran}

\usepackage{url}

\usepackage{cite}
\usepackage{amsmath,amssymb,amsfonts}
\usepackage{textcomp}
\usepackage{xcolor}
\usepackage{tikz, graphics, color, float, epsf}
\usepackage{amssymb, amsmath, mathtools, amsthm}
\usepackage[hidelinks]{hyperref}
\usepackage[ruled,vlined]{algorithm2e}
\def\BibTeX{{\rm B\kern-.05em{\sc i\kern-.025em b}\kern-.08em
    T\kern-.1667em\lower.7ex\hbox{E}\kern-.125emX}}

\DeclareMathOperator*{\argmin}{arg\,min}

\newtheorem{theorem}{Theorem}
\newtheorem{lemma}[theorem]{Lemma}
\newtheorem{proposition}[theorem]{Proposition}

\theoremstyle{remark}
\newtheorem{remark}{Remark}

\SetKw{Assert}{Assert}
\SetKwInOut{Require}{Require}

\usepackage{graphicx}
\usepackage{balance}
\usepackage{eso-pic}

\begin{document}

\title{Fast Graph Laplacian Estimation using\\Effective Resistance}

\author{Christoffer Kjellson, Claudio Altafini, and Emma Tegling
\thanks{C. Kjellson and E. Tegling are with the Department of Automatic Control, Lund University. Email: \{{\tt\small{christoffer.kjellson, emma.tegling}\}@control.lth.se.}  C. Altafini is with the Division of Automatic Control, Department of Electrical Engineering, Linköping University. E-mail: {\tt\small{claudio.altafini@liu.se}}. All are with the ELLIIT Strategic Research Area. }
\thanks{This work is partially funded by the Wallenberg AI, Autonomous Systems and Software Program (WASP) funded by the Knut and Alice Wallenberg Foundation and by the Swedish Research Council (grant 2024-04772 to C.A.).}}

\maketitle

\AddToShipoutPictureFG*{%
  \AtPageLowerLeft{%
    \put(40,20){%
      \parbox{\dimexpr\paperwidth-80pt\relax}{%
        \centering\footnotesize
        © 2026 IEEE.  Personal use of this material is permitted.  Permission from IEEE must be obtained for all other uses, in any current or future media, including reprinting/republishing this material for advertising or promotional purposes, creating new collective works, for resale or redistribution to servers or lists, or reuse of any copyrighted component of this work in other works.
      }%
    }%
  }%
}

\begin{abstract}
Inferring network topology from noisy node observations is a central problem in graph signal processing. In this paper, we consider Laplacian-constrained graph estimation for Gaussian Markov random fields, focusing on the underdetermined regime in which the number of samples is smaller than the number of graph nodes. Existing approaches often formulate the problem as a sparsity-regularized maximum-likelihood estimation problem. While effective, such methods typically require iterative optimization and are often computationally demanding, particularly under Laplacian constraints. Instead, we propose a non-iterative estimator of graph Laplacians that uses effective resistance for regularization, and evaluate the method using a simple sparsification procedure. Experiments show that with some trade-off in edge and weight recovery on the considered dataset, computational cost for moderately sized graphs can be substantially reduced.
\end{abstract}

\begin{IEEEkeywords}
Effective resistance, Gaussian Markov random fields, graph signal processing, Laplacian learning.
\end{IEEEkeywords}

\IEEEpeerreviewmaketitle

\section{Introduction}

\IEEEPARstart{I}{n} graph signal processing, observed signals are often governed by an underlying graph that is not directly known. Recovering this graph from noisy node observations is therefore a central problem and has applications in physical, biological, and social systems. Useful overviews of the topic are found in~\cite{ortega_signalprocessing} and~\cite{Mateos_2019}.

A classical approach for the case in which the signals on the graph are jointly Gaussian and the sought-after graph is assumed to be sparse is the graphical lasso~\cite{friedman},~\cite{yuan_lin},~\cite{banerjee}. This method originates from work on covariance/model selection~\cite{dempster, cox1996multivariate}, and is based on maximum likelihood estimation (MLE) regularized by sparsity. Such methods are appropriate for general precision matrices, but do not enforce the structure of a graph Laplacian, which is relevant in a wide variety of signal processing and machine learning applications~\cite{egilmez2017}. A separate framework estimates a Laplacian under the assumption of smooth signals on the graph, as described, for example, in~\cite{dong} and~\cite{kalofolias}, without using maximum likelihood estimation.

In this paper, we consider the special case where the covariance matrix equals the pseudoinverse of the graph Laplacian, i.e., $\Sigma=L^\dagger$, on a Gaussian Markov random field. This is often referred to as an intrinsic Gaussian Markov random field, or equivalently, a degenerate Gaussian model on the subspace orthogonal to $\mathbf{1}$. The goal is then to reconstruct the $p\times p$ Laplacian from $n<p$ independent and identically distributed samples from $\mathcal{N}(0,L^\dagger)$.

Laplacian-constrained MLE was described and implemented in~\cite{egilmez2017}, but after Ying et al. showed that the $\ell_1$-penalty is suboptimal for Laplacian constraints, subsequent developments were made in~\cite{zhao_2019},~\cite{ying_2020},~\cite{kumar},~\cite{ying_2021},~\cite{medvedovsky2024}. Unlike implementations such as BigQuic~\cite{bigquic} for solving the graphical lasso problem, and~\cite{kalofolias} for the more general smoothness assumption, methods that enforce Laplacian constraints with the MLE formulation scale poorly with the size of the problem. Additionally, all methods above typically use iterative optimization and require a grid search over one or multiple hyperparameters.

To avoid iterative optimization, we propose an estimator based on the connection between the Laplacian pseudoinverse and the effective resistance matrix. We use this to construct a new regularizer applied to the estimated effective resistance matrix. Effective resistance~\cite{ellens} has been used for graph estimation in~\cite{bennett} and~\cite{pavez}, but in a manner different from this paper. We also draw on previous work on the connection between effective resistance and Euclidean metrics~\cite{Devriendt_2022}.

We make the common assumption that the graphs considered are undirected and connected, but allow signed edge weights, as in, for example,~\cite{fontan2023}, provided the graph Laplacian is positive semidefinite (PSD) and has rank $p-1$. As the key idea underlying our regularizer, we show that elementwise powers of effective resistance matrices of such graphs can remain Euclidean distance matrices (EDMs) for exponents greater than one. Based on this result, we construct a direct estimator of the graph Laplacian from the sample covariance matrix, thereby avoiding iterative optimization. Our numerical experiments show that this yields performance comparable to smoothness-based estimators on the dataset used, while significantly reducing the core estimator's computational cost.

\section{Elementwise powers of effective resistances}

Before presenting our algorithm, we provide a motivation for our regularization approach. For this purpose, define a $p\times p$ EDM $D$ as a matrix with squared pairwise Euclidean distances $D_{ij}=\|y_i-y_j\|_2^2$ among points $y_i\in\mathbb{R}^d,i\in\{1,...,p\}$ of dimension $0\leq d\leq p-1$. A matrix being an EDM is equivalent to the matrix $-\frac{1}{2}JDJ$ being positive semidefinite, where $J$ is the centering matrix $J = I - \frac{1}{p}\mathbf{1}\mathbf{1}^\top$ and $\mathbf{1}$ is the column vector of all ones~\cite{dokmanic}. We will consider elementwise powers of such matrices, denoted by $^\circ$, such that $[D^{\circ\alpha}]_{ij}=D_{ij}^{\alpha}$. We restate a known property of EDMs in Lemma~\ref{thm:nons}, with derivations found in e.g.~\cite{schoenberg,Micchelli1986}.

\begin{lemma}\label{thm:nons}
Let $D$ be a $p\times p$ EDM of $p$ distinct points. Then, for all $0<\alpha<1$, $D^{\circ\alpha}$ is an EDM of full rank, $p$ distinct points, and $x^\top D^{\circ\alpha}x<0$ for all $x\neq 0$ such that $x^\top\mathbf{1}=0$.
\end{lemma}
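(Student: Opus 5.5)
The plan is to go through Schoenberg's characterization of EDMs via conditionally negative definite kernels and the Bernstein-type integral representation of $t\mapsto t^\alpha$. The starting point is that $D$ is an EDM exactly when $D_{ii}=0$ and $x^\top D x\le 0$ for all $x$ with $x^\top\mathbf{1}=0$; this is the same as $-\frac12 JDJ\succeq 0$, since $x^\top JDJx = x^\top D x$ for centered $x$. So it suffices to prove the strict inequality $x^\top D^{\circ\alpha}x<0$ for every nonzero $x\perp\mathbf{1}$. That inequality gives the EDM property directly, because the diagonal of $D^{\circ\alpha}$ is still zero. Distinctness of the points is also immediate: the off-diagonal entries $D_{ij}^\alpha$ are positive since $D_{ij}>0$.

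For the strict inequality I will use, for $0<\alpha<1$ and $t\ge 0$,
\[
t^{\alpha}=\frac{\alpha}{\Gamma(1-\alpha)}\int_0^\infty \bigl(1-e^{-st}\bigr)s^{-\alpha-1}\,ds .
\]
This can be checked by substitution, or cited from~\cite{schoenberg,Micchelli1986}. Take $x\perp\mathbf{1}$, so that $\sum_{ij}x_ix_j=0$. Then
\[
x^\top D^{\circ\alpha}x=-\frac{\alpha}{\Gamma(1-\alpha)}\int_0^\infty x^\top E_s x\, s^{-\alpha-1}\,ds,\qquad [E_s]_{ij}=e^{-s\|y_i-y_j\|^2}.
\]
The Gaussian kernel $E_s$ is positive semidefinite for each $s>0$, since it is a Fourier transform of a Gaussian measure, and so the quadratic form is $\le 0$. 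For strictness I will use the stronger fact that $E_s$ is positive definite when the $y_i$ are distinct. Write $x^\top E_s x=c\int |\sum_j x_j e^{i\omega^\top y_j}|^2 e^{-\|\omega\|^2/(4s)}\,d\omega$. This vanishes only if the trigonometric sum $\sum_j x_j e^{i\omega^\top y_j}$ is identically zero, and because the exponentials with distinct frequencies $y_j$ are linearly independent, that forces $x=0$. The integrand is therefore strictly positive for every $s>0$, and the integral is strictly positive.

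The remaining claim is full rank. Suppose $D^{\circ\alpha}v=0$, and write $v=x+c\mathbf{1}$ with $x\perp\mathbf{1}$. Then $0=x^\top D^{\circ\alpha}v=x^\top D^{\circ\alpha}x+c\,x^\top D^{\circ\alpha}\mathbf{1}$. I will also use $0=\mathbf{1}^\top D^{\circ\alpha}v=\mathbf{1}^\top D^{\circ\alpha}x+c\,\mathbf{1}^\top D^{\circ\alpha}\mathbf{1}$, where $\mathbf{1}^\top D^{\circ\alpha}\mathbf{1}>0$ whenever $p\ge2$. Solving the second equation for $c$ and substituting into the first gives $x^\top D^{\circ\alpha}x=(\mathbf{1}^\top D^{\circ\alpha}x)^2/\mathbf{1}^\top D^{\circ\alpha}\mathbf{1}\ge 0$. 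By the strict inequality this forces $x=0$, then $c=0$, and hence $v=0$. (The case $p=1$ is trivial, and I will exclude it or treat it separately.)

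I expect the main obstacle to be the strict positive definiteness of the Gaussian kernel for distinct points, which is exactly where distinctness is needed. If the Fourier argument feels too heavy, I would instead cite Schoenberg's theorem that $e^{-s\|\cdot\|^2}$ is strictly positive definite~\cite{schoenberg,Micchelli1986}.
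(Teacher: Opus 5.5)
Your proof is correct and is essentially the classical Schoenberg--Micchelli argument: the representation $t^\alpha=\frac{\alpha}{\Gamma(1-\alpha)}\int_0^\infty(1-e^{-st})s^{-\alpha-1}\,ds$ combined with strict positive definiteness of the Gaussian kernel at distinct points. The paper states the lemma without proof and defers to exactly this argument via~\cite{schoenberg,Micchelli1986}. One small correction to your aside: for $p=1$ the full-rank claim is false ($D^{\circ\alpha}=[0]$ is singular), so that case must be excluded rather than treated as trivial.
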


Consider a graph $ \mathcal{G} $ and its Laplacian $L$.
The effective resistance matrix of $\mathcal{G}$ is an EDM~\cite{Devriendt_2022}, and the transformation between the effective resistance matrix and Laplacian is given by $R = \mathbf{1}\text{diag}(L^\dagger)^\top + \text{diag}(L^\dagger)\mathbf{1}^\top - 2L^\dagger$ and $L = (-\frac{1}{2}JRJ)^\dagger$, where $\text{diag}(\cdot)$ is a column vector with the diagonal elements of the matrix. Now, consider a graph whose effective resistance matrix to some elementwise power $\gamma$ remains an EDM. Denote by $\Omega_\gamma$ the set of all such graphs. Note that this means that all graphs we consider in this paper are contained in $\Omega_1$. Proposition~\ref{thm:gammamin} shows an interesting property of $\Omega_\gamma$.

\begin{proposition}\label{thm:gammamin}
    For any undirected connected graph $\mathcal{G}$ with positive semidefinite Laplacian of rank $p-1$, there exists a $\gamma>1$ such that $\mathcal{G}\in\Omega_a$ for all $a\in[1,\gamma]$.
\end{proposition}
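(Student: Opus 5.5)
Our plan is to use the characterization of an EDM by conditional negative definiteness. We work on the subspace $V=\{x:\ x^\top\mathbf{1}=0\}$ and write $R_a := R^{\circ a}$. A symmetric matrix with zero diagonal is an EDM iff $-\tfrac12 J R_a J\succeq 0$, i.e. iff $x^\top R_a x\le 0$ for all $x\in V$. Since $R_{ii}=0$, every $R_a$ has zero diagonal, so we only need to control the quadratic form of $R_a$ on $V$.

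\textbf{Step 1: strict negativity at $a=1$.} Because $L$ is PSD of rank $p-1$ with kernel spanned by $\mathbf{1}$, the matrix $L^\dagger$ is positive definite on $V$. For $x\in V$ the rank-one terms in $R=\mathbf{1}\,\mathrm{diag}(L^\dagger)^\top+\mathrm{diag}(L^\dagger)\mathbf{1}^\top-2L^\dagger$ vanish, which gives
\begin{equation*}
x^\top R x=-2\,x^\top L^\dagger x\le -2\lambda\,\|x\|^2,
\end{equation*}
where $\lambda>0$ is the smallest eigenvalue of $L^\dagger$ restricted to $V$. We also need $R_{ij}>0$ for $i\neq j$. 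This holds because $R_{ij}=(e_i-e_j)^\top L^\dagger(e_i-e_j)$ and $e_i-e_j\in V\setminus\{0\}$. Hence the points are distinct, and $R^{\circ a}$ is well defined and continuous in $a$.

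\textbf{Step 2: perturbation in $a$.} The map $a\mapsto R^{\circ a}$ is continuous in operator norm on $[1,2]$, since each off-diagonal entry $R_{ij}^a=e^{a\ln R_{ij}}$ is smooth in $a$ and the diagonal entries stay $0$. Define
\begin{equation*}
\mu(a)=\max_{x\in V,\ \|x\|=1} x^\top R^{\circ a}x .
\end{equation*}
This is a maximum of continuous functions over a compact set, uniformly Lipschitz in $a$, so $\mu$ is continuous. By Step 1, $\mu(1)\le -2\lambda<0$. An explicit bound is
\begin{equation*}
\|R^{\circ a}-R\|_2\le \|R^{\circ a}-R\|_F\le p\max_{i\neq j}\bigl|R_{ij}^a-R_{ij}\bigr|,
\end{equation*}
and the right-hand side tends to $0$ as $a\to 1$. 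So there is $\gamma>1$ with $\|R^{\circ a}-R\|_2<2\lambda$ for all $a\in[1,\gamma]$. For those $a$, $\mu(a)<0$, so $x^\top R^{\circ a}x<0$ for all nonzero $x\in V$. Therefore $-\tfrac12 JR^{\circ a}J\succeq 0$, $R^{\circ a}$ is an EDM, and $\mathcal{G}\in\Omega_a$ for all $a\in[1,\gamma]$.

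\textbf{Main obstacle.} The crux is Step 1: we need the form to be \emph{strictly} negative on $V$, not merely nonpositive. This is exactly where the rank $p-1$ hypothesis is used, and it is also what allows signed edge weights: only PSD-ness and the rank of $L$ enter, never the signs of the weights. Without strictness the perturbation argument collapses, since an arbitrarily small change in $a$ could push a zero eigenvalue of $-\tfrac12 JRJ$ negative. Lemma~\ref{thm:nons} is not needed for $a>1$. It could optionally be combined with this argument to show that the interval extends slightly below $1$ as well.
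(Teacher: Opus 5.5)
Your proof is correct and uses the same idea as the paper's. The rank-$p-1$ hypothesis makes $-\tfrac12 JRJ = L^\dagger$ strictly positive on $\mathbf{1}^\perp$, and continuity of $a\mapsto R^{\circ a}$ preserves this for $a$ near $1$. The paper obtains a single $\gamma$ from continuity of eigenvalues and then fills in $[1,\gamma]$ by applying Lemma~\ref{thm:nons} to $(R^{\circ\gamma})^{\circ a/\gamma}$. Your bound $\|R^{\circ a}-R\|_2<2\lambda$ holds on the whole interval, so you are right that Lemma~\ref{thm:nons} is not needed for the proposition. That lemma does give something your local argument does not: membership in $\Omega_a$ is downward closed in $a$, which is what justifies finding $\gamma^\ast$ by binary search.
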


\begin{proof}
    Since $L$ is a positive semidefinite Laplacian with rank $p-1$, $L^\dagger=-\frac{1}{2}JRJ$ is positive semidefinite with exactly one zero eigenvalue. From the continuity of roots of the characteristic polynomial, this means that there exists a $\gamma>1$ that keeps the remaining $p-1$ eigenvalues of $-\frac{1}{2}JR^{\circ\gamma}J$ positive. As mentioned previously, positive semidefiniteness of this matrix is equivalent to $R^{\circ\gamma}$ being an EDM, and by Lemma~\ref{thm:nons}, all $R^{\circ a}$ with $a\in[1,\gamma]$ will also be EDMs.
\end{proof}

Denote by $\gamma^\ast$ the supremum of the values of $\gamma$ for which Proposition~\ref{thm:gammamin} holds. The value of $\gamma^\ast$ depends on the topology of $\mathcal{G}$, its size, and edge weights. Fig.~\ref{fig:gstar_graphs} shows $\gamma^\ast$ computed using binary search for the graph families in Table~\ref{tab:graphs}, as the number of nodes grows. Since four graph families are random graphs, we generate 10 replicates for each $p$ and for each family. We will now use the fact that $\gamma^\ast>1$ for all considered graphs to construct a regularization for graph estimation.

\begin{table}
    \caption{Specification of graph families used in the paper}
    \label{tab:graphs}
    \centering
    \begin{tabular}{l|l}\hline
        Graph$^\ast$ & Description \\ \hline
        Path & Connects nodes in a chain \\
        Cycle & Adds an edge between the two endpoints of the path \\
        Star & One central node connected to all remaining nodes \\
        Grid & Two-dimensional lattice \\
        BA$^{(s)}$ & Barabási-Albert graph, preferential attachment to $s$ nodes \\
        Reg$^{(s)}$ & Random regular graph with degree $s$ \\
        ER$^{(s)}$ & Erd\H{o}s--R\'enyi graph$^{**}$ with edge probability $s$ \\
        WS$^{(s,t)}$ & Watts-Strogatz graph, degree $s$ and rewiring probability $t$ \\ \hline
        \multicolumn{2}{p{251pt}}{$^\ast$Weights are drawn from a uniform distribution on the interval $[0.5,2.0]$. $^{**}$The Erd\H{o}s--R\'enyi graph is resampled until the graph is connected.}
    \end{tabular}
\end{table}

\begin{figure}
    \centering
    \includegraphics[width=.99\linewidth]{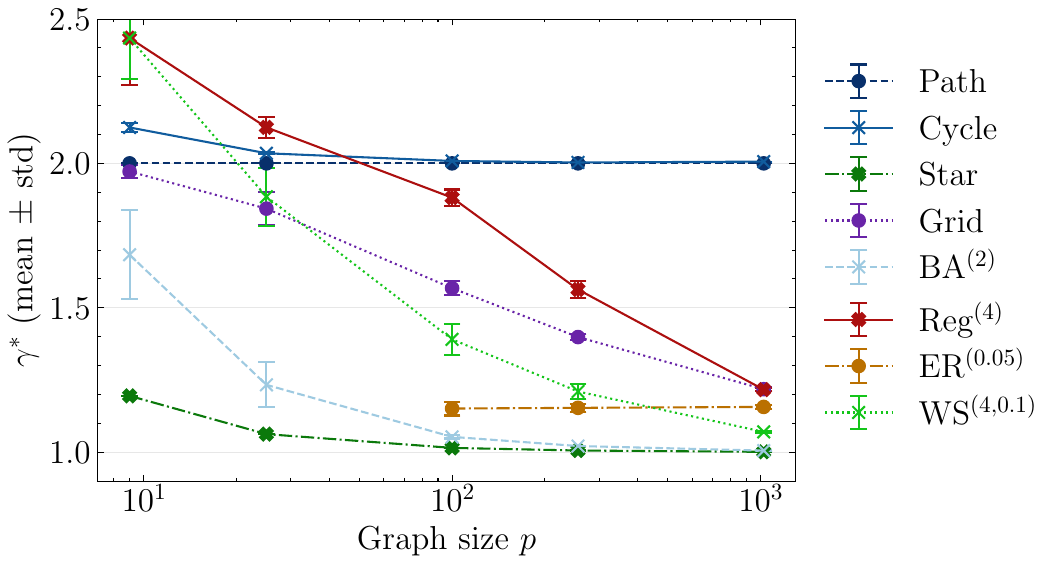}
    \caption{The value of the elementwise power~$\gamma^\ast$ for which the effective resistance matrix remains an EDM. The value of~$\gamma^\ast$ depends on the graph family, size, and edge weights.}
    \label{fig:gstar_graphs}
\end{figure}

\section{Graph estimation algorithm}

Motivated by the previous section, we propose Algorithm~\ref{alg:dsle}, which we refer to as Direct Spectral Laplacian Estimation~(DSLE). Denote by $X$ the $n\times p$ matrix of samples from $\mathcal{N}(0,L^\dagger)$ (assumed to be centered), and $S$ the corresponding sample covariance matrix. The main idea of Algorithm~\ref{alg:dsle} is to project the estimated effective resistance matrix $\hat{R}$ to the elementwise power of $\gamma$, to an EDM using Algorithm~\ref{alg:mds}, and then transform it back into an estimated Laplacian. Here, MDS$_+$ denotes multidimensional scaling in which only non-negative eigenvalues are retained. This standard method projects the centered Gram matrix of any symmetric matrix $Y$ with a zero diagonal onto the positive semidefinite cone, yielding a valid EDM~\cite{dokmanic}. The method is specified in Algorithm~\ref{alg:mds}, where $Q$ and $\Lambda$ contain the eigenvectors and eigenvalues, respectively, from the eigenvalue decomposition (EVD), and $\Lambda_+$ is equal to $\Lambda$, but with the negative eigenvalues set to zero.

\begin{algorithm}\caption{DSLE}\label{alg:dsle}
    \DontPrintSemicolon
    \KwIn{$X\in\mathbb{R}^{n\times p}$, $\gamma$}
    \KwOut{$\bar{L}$}
    \Require{$\gamma>1,\text{ }\forall\, i\neq j:\; X_{:,i}\neq X_{:,j}$}
    $S\gets\frac{1}{n}\sum_{i=1}^n X_{i,:}^\top X_{i,:}$\;
    $\hat{R}\gets\mathbf{1}\text{diag}(S)^\top + \text{diag}(S)\mathbf{1}^\top - 2S$\;
    $\bar{R}\gets \text{MDS}_+(\hat{R}^{\circ\gamma})^{\circ\frac{1}{\gamma}}$\;
    $\bar{L}\gets (-\frac{1}{2}J\bar{R}J+\frac{1}{p}\mathbf{1}\mathbf{1}^\top)^{-1}-\frac{1}{p}\mathbf{1}\mathbf{1}^\top$\;
\end{algorithm}

\begin{algorithm}
    \DontPrintSemicolon
    \KwIn{$Y$}
    \KwOut{$D$}
    $B\gets-\frac{1}{2}JYJ$\;
    $\Lambda,Q\gets \text{EVD}(B)$\;
    $G\gets Q\Lambda_{+} Q^\top$\;
    $D\gets \mathbf{1}\text{diag}(G)^\top + \text{diag}(G)\mathbf{1}^\top - 2G$\;
    \caption{MDS$_+$ (See Algorithm 1 in~\cite{dokmanic})}
    \label{alg:mds}
\end{algorithm}

To provide guarantees on the output of Algorithm~\ref{alg:dsle}, we first state Lemma~\ref{thm:increasing}, showing that Algorithm~\ref{alg:mds} is a non-decreasing operation on all elements of its input matrix $Y$. The proof is given in the supplementary material.

\begin{lemma}\label{thm:increasing}
    Let $Y$ be a symmetric matrix with zero diagonal. Then, for the output matrix $D$ in Algorithm~\ref{alg:mds}, it holds that $D_{ij}\geq Y_{ij}$ for all $i,j\in\{1,...,p\}$.
\end{lemma}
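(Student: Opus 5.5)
Write $B=-\frac{1}{2}JYJ$ and decompose it as $B=B_+-B_-$. Here $B_+=Q\Lambda_+Q^\top$ is the matrix $G$ of Algorithm~\ref{alg:mds}, and $B_-=Q\Lambda_-Q^\top$ is positive semidefinite, where $\Lambda_-$ collects the absolute values of the negative eigenvalues. Define the linear map $\kappa(M)=\mathbf{1}\text{diag}(M)^\top+\text{diag}(M)\mathbf{1}^\top-2M$, so that $D=\kappa(G)$. The plan is to show two things: that $\kappa(B)=Y$, and that $\kappa(B_-)$ is entrywise nonnegative. Linearity then gives
\begin{equation*}
D=\kappa(B_+)=\kappa(B)+\kappa(B_-)\geq Y
\end{equation*}
entrywise.

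For the first identity, expand $JYJ=Y-\frac{1}{p}\mathbf{1}\mathbf{1}^\top Y-\frac{1}{p}Y\mathbf{1}\mathbf{1}^\top+\frac{1}{p^2}(\mathbf{1}^\top Y\mathbf{1})\mathbf{1}\mathbf{1}^\top$. Setting $r=\frac{1}{p}Y\mathbf{1}$ and $c=\frac{1}{p^2}\mathbf{1}^\top Y\mathbf{1}$, and using the symmetry of $Y$, this reads
\begin{equation*}
B_{ij}=-\tfrac12\bigl(Y_{ij}-r_i-r_j+c\bigr).
\end{equation*}
The zero diagonal of $Y$ gives $B_{ii}=r_i-\frac{c}{2}$. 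Then
\begin{equation*}
B_{ii}+B_{jj}-2B_{ij}=(r_i+r_j-c)+(Y_{ij}-r_i-r_j+c)=Y_{ij},
\end{equation*}
so $\kappa(B)=Y$. This classical step is exactly where the zero-diagonal hypothesis is needed.

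For the second claim, since $B_-\succeq 0$ we can write $B_-=VV^\top$ with rows $v_i$. Then
\begin{equation*}
\kappa(B_-)_{ij}=\|v_i\|^2+\|v_j\|^2-2v_i^\top v_j=\|v_i-v_j\|^2\geq 0.
\end{equation*}
Equivalently, $(e_i-e_j)^\top B_-(e_i-e_j)\geq 0$. On the diagonal all three matrices $D$, $Y$, $\kappa(B_-)$ vanish, so equality holds there, consistent with the claim.

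The only step needing care is verifying $\kappa(-\frac12 JYJ)=Y$. It is a routine expansion, but it must handle a general symmetric, possibly non-EDM, $Y$. The argument never uses positivity of $Y$, only its symmetry and zero diagonal, so no further obstacle is expected.
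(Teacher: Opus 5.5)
Your proof is correct and is essentially the paper's argument: the paper also splits the eigendecomposition of $B$ into the nonnegative part $G$ and the negative-eigenvalue part, and obtains $Y_{ij}=D_{ij}+\sum_{k:\lambda_k<0}\lambda_k(v_{ik}-v_{jk})^2$, which is exactly your $\kappa(B)=Y$ combined with $\kappa(B_-)\geq 0$. The one difference is that you explicitly verify $\kappa(-\tfrac12 JYJ)=Y$ from symmetry and the zero diagonal, a step the paper only asserts.
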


Now, Proposition~\ref{thm:main} shows that the output matrix $\bar{L}$ of Algorithm~\ref{alg:dsle} is always a signed PSD Laplacian with rank $p-1$.

\begin{proposition}\label{thm:main}
    Assume that no two columns in $X$ are identical, and $\gamma>1$. Then, $\bar{L}$ in Algorithm~\ref{alg:dsle} is a signed PSD Laplacian with rank $p-1$. Thus, the final step of Algorithm~\ref{alg:dsle} is equivalent to taking the pseudoinverse.
\end{proposition}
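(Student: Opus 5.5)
The proof hinges on showing that $\bar{R}$ is an EDM of $p$ distinct points satisfying the strict negativity in Lemma~\ref{thm:nons}. Once that holds, $B:=-\frac{1}{2}J\bar{R}J$ is PSD with rank exactly $p-1$ and kernel spanned by $\mathbf{1}$, and the rest is linear algebra.

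\emph{Step 1: distinct points.} Because no two columns of $X$ coincide, $\hat{R}_{ij}=\frac{1}{n}\|X_{:,i}-X_{:,j}\|_2^2>0$ for all $i\neq j$. Here $\hat{R}$ is itself an EDM of the $p$ distinct points $X_{:,i}/\sqrt{n}$. Hence $\hat{R}^{\circ\gamma}$ is symmetric with zero diagonal and strictly positive off-diagonal entries. By Lemma~\ref{thm:increasing}, $D:=\text{MDS}_+(\hat{R}^{\circ\gamma})$ satisfies $D_{ij}\geq\hat{R}_{ij}^{\gamma}>0$ for $i\neq j$. Also, $D$ is an EDM by construction, since $G\succeq 0$ and $D$ is built from $G$ exactly as a distance matrix of the Gram factors. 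So $D$ is an EDM of $p$ distinct points.

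\emph{Step 2: the root.} Apply Lemma~\ref{thm:nons} with $\alpha=1/\gamma\in(0,1)$. This shows $\bar{R}=D^{\circ 1/\gamma}$ is an EDM of $p$ distinct points, and $x^\top\bar{R}x<0$ for all nonzero $x\perp\mathbf{1}$. Therefore $x^\top Bx=-\frac{1}{2}x^\top\bar{R}x>0$ for every nonzero $x\perp\mathbf{1}$. Together with $B\mathbf{1}=0$, coming from $J\mathbf{1}=0$, this gives $B\succeq 0$ with $\ker B=\text{span}(\mathbf{1})$ and rank $p-1$. This step carries the weight of the argument: the strict inequality in Lemma~\ref{thm:nons} is exactly what rules out extra null directions. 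The most delicate point is confirming that $D$ has no coincident points, and Lemma~\ref{thm:increasing} handles it.

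\emph{Step 3: the inverse.} The matrix $B+\frac{1}{p}\mathbf{1}\mathbf{1}^\top$ is positive definite. It acts as $B$ on $\mathbf{1}^\perp$, where $B$ is positive definite, and as the identity on $\text{span}(\mathbf{1})$, since $\frac{1}{p}\mathbf{1}\mathbf{1}^\top\mathbf{1}=\mathbf{1}$. Because both blocks respect the orthogonal decomposition, the inverse is $B^\dagger+\frac{1}{p}\mathbf{1}\mathbf{1}^\top$. Consequently $\bar{L}=B^\dagger$, which shows the last line of Algorithm~\ref{alg:dsle} equals a pseudoinverse.

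\emph{Step 4: the Laplacian properties.} $B^\dagger$ is symmetric and PSD with rank $p-1$, and its kernel is $\text{span}(\mathbf{1})$, so $\bar{L}\mathbf{1}=0$ and its rows sum to zero. Any symmetric matrix with zero row sums is a signed Laplacian, with edge weights $w_{ij}=-\bar{L}_{ij}$. This completes the proof.
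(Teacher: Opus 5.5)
Your proposal is correct and follows the paper's proof essentially step for step. Distinct columns give strictly positive off-diagonal entries of $\hat{R}$, and Lemma~\ref{thm:increasing} carries this over to $\mathrm{MDS}_+(\hat{R}^{\circ\gamma})$. Lemma~\ref{thm:nons} with $\alpha=1/\gamma$ then gives strict negativity of $\bar{R}$ on $\mathbf{1}^\perp$, hence rank $p-1$ of $-\frac{1}{2}J\bar{R}J$, and the pseudoinverse identity finishes the argument.

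The only differences are small refinements. You verify $(B+\frac{1}{p}\mathbf{1}\mathbf{1}^\top)^{-1}=B^\dagger+\frac{1}{p}\mathbf{1}\mathbf{1}^\top$ directly instead of citing it. You also read the rank off from positivity of $B$ on $\mathbf{1}^\perp$, rather than via the paper's appeal to nonsingularity of $\bar{R}$, which is not actually needed.
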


\begin{proof}[Proof sketch]
(Full proof is given in the supplementary material) First, if the columns of $X$ are distinct, then $\hat{R}$ has no zero off-diagonal entries, as follows from the explicit expression for each element in $\hat{R}$. By Lemma~\ref{thm:increasing}, $\mathrm{MDS}_+(\hat{R}^{\circ\gamma})$ represents $p$ distinct points. Since $0<1/\gamma<1$, the properties in Lemma~\ref{thm:nons} apply to $\bar{R}$. These conditions give $\text{rank}(-\frac{1}{2}J\bar{R}J)=p-1$, which ensures that the inverse construction in Algorithm~\ref{alg:dsle} produces a signed PSD Laplacian with rank $p-1$.
\end{proof}

We suggest that Algorithm~\ref{alg:dsle} removes noise that is inconsistent with effective resistances under the assumption of the chosen $\gamma$, and the higher the value of $\gamma$, the more restrictive this regularization is. We also note that the algorithm restores the rank deficiency when $n<p$.

To illustrate this, we generate a Reg$^{(3)}$ graph (see Table~\ref{tab:graphs}) of size $p=512$. The resulting graph has $\gamma^\ast\approx1.4$. We then take $n=256$ samples from $\mathcal{N}(0,L^\dagger)$, run Algorithm~\ref{alg:dsle} with $\gamma=1.4$, and plot the resulting edge weight estimates in a histogram (Fig.~\ref{fig:dists}(b)). For comparison, we also plot the result without regularization and replacing the last step in Algorithm~\ref{alg:dsle} with the standard pseudoinverse, in Fig.~\ref{fig:dists}(a), the MLE~\cite{medvedovsky2024} result in Fig.~\ref{fig:dists}(d), and raw correlation in Fig.~\ref{fig:dists}(c). The correlation matrix is calculated as $C_{ij}=S_{ij}/\sqrt{S_{ii}S_{jj}}$. This shows that DSLE with $\gamma=1.4$ separates most true from false edges.

\begin{figure}
    \centering
    \includegraphics[width=.99\linewidth]{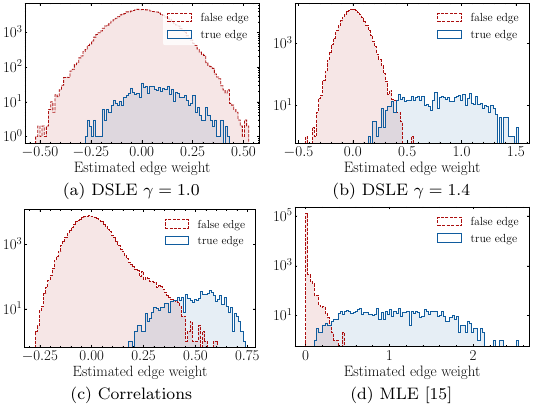}
    \caption{Weight distributions from DSLE with (a) $\gamma=1.0$, (b) $\gamma=1.4$, (c) values in the correlation matrix, and (d)~MLE~\cite{medvedovsky2024}, colored by ground truth adjacency. DSLE with $\gamma=1.4$ separates most true edges from false edges.}
    \label{fig:dists}
\end{figure}

We note that in Algorithm~\ref{alg:dsle}, matrices are dense, leading to a complexity of $\mathcal{O}(p^3)$. This is similar to other Laplacian-constrained methods, but worse than, for example, the $\mathcal{O}(p^2)$ complexity in~\cite{kalofolias}. However, our method does not incur this complexity at each step of an iterative optimization algorithm.

\begin{remark}
Bias is introduced both by the nonlinear operation on $\hat{R}$ and when $\gamma>\gamma^\ast$. Analysis of the impact of this bias is, however, beyond the scope of this paper.
\end{remark}

\section{Experimental results}

To evaluate Algorithm~\ref{alg:dsle}\footnote{Code is available at \url{https://github.com/ckjellson/GraphLapEstimation}.}, whose output is dense, we sparsify the resulting graph~$\bar{L}$ using Algorithm~\ref{alg:spars}. This is a sweep over~$M$ possible thresholds, but to guarantee connectedness, we also precompute the \textit{maximum} spanning tree (MaxST) with time complexity~$\mathcal{O}(p^2)$ using Prim's algorithm. We then let the resulting graphs be the union of the MaxST and each thresholded graph, where~$\odot$ denotes the elementwise product. Since we consider only non-negative thresholds, the PSD and the rank-$p-1$ properties are preserved, provided all MaxST weights are positive. The final step is a global rescaling that minimizes the negative log-likelihood, which follows directly from the MLE formulation in~\cite{medvedovsky2024}, see the supplementary material for a derivation. Note that without the regularization step in Algorithm~\ref{alg:dsle}, the sparsification step would not be helpful, as indicated by Fig.~\ref{fig:dists}(a).

\begin{algorithm}\caption{Sparsify (Sp)}\label{alg:spars}
    \DontPrintSemicolon
    \KwIn{$\bar{L},S,M$}
    \KwOut{$\bar{L}_1,\dots,\bar{L}_M$}
    $\bar W_{ij} \gets -\bar L_{ij}$ for $i \ne j$, and $\bar W_{ii} \gets 0$\;
    \For{$m \gets 1$ \KwTo $M$}  {
        $\tau\gets \max(\bar{W})m/M$\;
        $Q_{ij}\gets1\text{ if }(i,j)\in\text{MaxST}(\bar{W})\cup\{\bar{W}_{ij}>\tau\}\text{ else } 0$\;
        $\bar{W}_m\gets\bar{W}\odot Q$\;
        $(\bar{L}_m)_{ii} \gets \sum_j (\bar{W}_m)_{ij}$,  $(\bar{L}_m)_{ij} \gets -(\bar{W}_m)_{ij}$, $i \ne j$\;
        $\bar{L}_m\gets((p-1)/\text{trace}(\bar{L}_mS))\bar{L}_m$\;
    }
\end{algorithm}

We generate a dataset of graphs with $p=256$ for the last four graph families in Table~\ref{tab:graphs}. This selection of graphs covers both a wide range of topologies and a wide range of $\gamma^\ast$. For this purpose, we construct 10 replicates in each graph family. Fig.~\ref{fig:gammastar} shows the distribution of $\gamma^\ast$ for the graphs in the dataset determined by binary search.

\begin{figure}
    \centering
    \includegraphics[width=.9\linewidth]{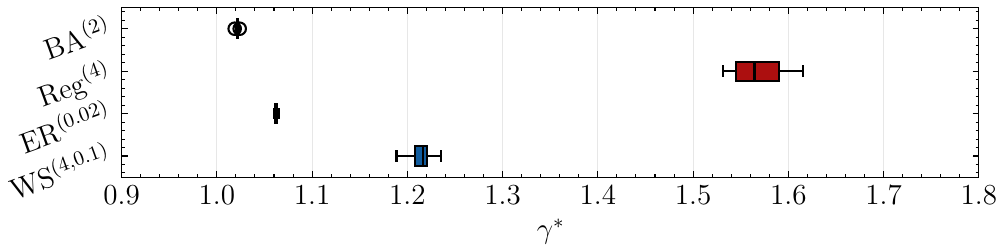}  
    \caption{Distributions of $\gamma^\ast$ for the graphs in the dataset.}
    \label{fig:gammastar}
\end{figure}

For comparison, we consider Kalofolias~\cite{kalofolias}, MLE~\cite{medvedovsky2024} and NewGLE~\cite{medvedovsky2024}. Additionally, we compare with applying Algorithm~\ref{alg:spars} to MLE~\cite{medvedovsky2024} and to the correlation matrix, which we refer to as MLE~\cite{medvedovsky2024}+Sp and CORR+Sp, respectively. To recover the scale for Kalofolias~\cite{kalofolias}, its estimates are rescaled in the same way as in the final step of Algorithm~\ref{alg:spars}. For each graph instance, we generate a \textit{path} (a sequence of estimated Laplacians with varying levels of sparsity) using the considered algorithms, except for MLE~\cite{medvedovsky2024}, which produces only a single estimate. For Kalofolias~\cite{kalofolias} and NewGLE~\cite{medvedovsky2024}, we generate paths by sweeping over 30 values of their sparsity hyperparameter, while MLE~\cite{medvedovsky2024}+Sp, CORR+Sp, and DSLE+Sp use Algorithm~\ref{alg:spars} with $M=30$, indicated by ``+Sp''.

We consider two metrics, F-score (FS) and relative error (RE). The F-score is given by $\text{FS} = \frac{2\text{tp}}{2\text{tp}+\text{fp}+\text{fn}}$, where tp=true positives, fp=false positives, and fn=false negatives. The relative error is given by $\text{RE} = ||\bar{L}_m-L||_{F}/||L||_{F}$. We evaluate only the solution along each path that minimizes $\sqrt{\smash[b]{\text{RE}^2+(1-\text{FS})^2}}$, i.e., assuming that model selection (see e.g.~\cite{model_selection}) is optimal for all algorithms. This is common when evaluating comparable methods, but for data without known ground truth, implementing model selection is necessary. Fig.~\ref{fig:results} shows the mean and standard deviations of the two metrics for each graph type and algorithm.

\begin{figure}
    \centering
    \includegraphics[width=.99\linewidth]{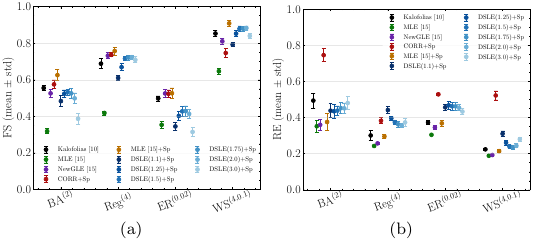}
    \caption{(a) F-score (FS) and (b) relative error (RE) for the best achievable solutions of each algorithm for each considered graph family. Despite the graphs’ low but varied values of $\gamma^\ast$, choosing $\gamma=1.5$ in DSLE yields consistently high performance.}
    \label{fig:results}
\end{figure}

We first note that the DSLE performance is comparable for $\gamma\in\{1.25, 1.5, 1.75, 2.0\}$ despite the low values of $\gamma^\ast$ reported in Fig.~\ref{fig:gammastar}. By Proposition~\ref{thm:gammamin}, we can choose $\gamma\leq\gamma^\ast$ without introducing additional bias, but this does not imply that $\gamma=\gamma^\ast$ is optimal for graph estimation performance. We conjecture that for many sparse graphs, there are small subsets of the graph that significantly lower $\gamma^\ast$, but have mostly local effects on graph estimation performance when using Algorithm~\ref{alg:dsle}.

To evaluate the dependence on uncertainty of the sample covariance matrix, we perform the above experiment for different numbers of samples according to $n/p\in\{0.25,0.5,0.75,1.0\}$. We use $\gamma=1.5$ as a heuristic and plot the best achievable FS and RE against $n/p$ across all graphs in the dataset in Fig.~\ref{fig:results2}. As expected, performance improves as $n/p$ increases. A further sensitivity analysis of $\gamma$ is provided in the supplementary material, also indicating that for our examples over the considered values of $n/p$, the sensitivity to the choice of~$\gamma$ is low. This shows that, in this setting, when implementing model selection for data without a known ground truth, we need to consider only a small number of values of~$\gamma$.

\begin{figure}
    \centering
    \includegraphics[width=.99\linewidth]{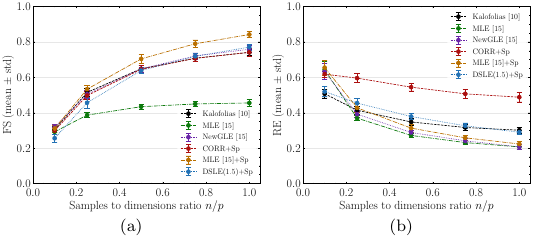}
    \caption{(a) FS and (b) RE for the best solution along each path depending on $n/p$. DSLE is simultaneously comparable to Kalofolias~\cite{kalofolias} in both edge and weight recovery.}
    \label{fig:results2}
\end{figure}

When comparing to CORR+Sp in Fig.~\ref{fig:results2}, we can see that DSLE has not only recovered correlation between edges but also their relative strength, since RE is substantially lower than for CORR+Sp. Another interesting observation is that MLE~\cite{medvedovsky2024}+Sp performs at least on par with the results obtained by sweeping over the sparsity hyperparameter as in NewGLE~\cite{medvedovsky2024}. This may be because the minimax concave penalty is itself threshold-based.

To investigate the computational cost of Algorithm~\ref{alg:dsle}, we implement DSLE in MATLAB and use publicly available MATLAB code for Kalofolias~\cite{kalofolias} and MLE~\cite{medvedovsky2024} on a laptop with an Intel Core Ultra 7 255U CPU and 16 GB RAM. MLE~\cite{medvedovsky2024} is, to our knowledge, among the fastest implementations of Laplacian-constrained MLE. Furthermore, we find the runtime of NewGLE~\cite{medvedovsky2024} to be independent of the chosen sparsity parameter, and MLE~\cite{medvedovsky2024} uses the same implementation. However, the Kalofolias~\cite{kalofolias} runtime depends on its sparsity parameter, so we report only the time to compute the most accurate Laplacian. Table~\ref{tab:compl} shows the resulting runtimes for four different graph sizes over ten replicates of a Barabási-Albert graph with the same parameters as used in the previous section. This shows that the runtime of DSLE is significantly lower than that of the other core estimators for the considered graphs. We note that there are many ways of implementing grid search over sparsity parameters in Kalofolias~\cite{kalofolias} and NewGLE~\cite{medvedovsky2024}, as well as choosing $\gamma$ in DSLE, and subsequent sparsification and/or model selection. Therefore, this paper compares only the runtimes of the core estimators.

\begin{table}
    \caption{Single-run core-estimator runtimes for Kalofolias~\cite{kalofolias}, MLE~\cite{medvedovsky2024}, and DSLE for 10 instances of a Barabási-Albert graph of different sizes $p$. DSLE is at least an order of magnitude faster for the considered graphs.}
    \label{tab:compl}
    \centering
    \begin{tabular}{l|l|l|l}\hline
        $p$       & Kalofolias~\cite{kalofolias} & MLE~\cite{medvedovsky2024} & DSLE \\ \hline
        $256$ & 0.26$\pm$0.01s & 0.55$\pm$0.06s & \textbf{0.026}$\pm$0.01s \\
        $512$ & 2.3$\pm$0.1s & 5.9$\pm$1.3s & \textbf{0.06}$\pm$0.01s \\
        $1024$ & 8.8$\pm$0.3s & 47$\pm$10s & \textbf{0.30}$\pm$0.04s \\
        $2048$ & 42$\pm$0.6s & 780$\pm$160s & \textbf{2.3}$\pm$0.05s \\ \hline
    \end{tabular}
\end{table}

We conclude this section by suggesting that our dense estimator, together with the sparsification approach, yields performance comparable to that of the smoothness-based estimator on our dataset, as shown in Fig.~\ref{fig:results2}. In addition, to our knowledge, it is the only algorithm that can achieve this for both FS and RE simultaneously without iterative optimization.

\section{Conclusion}

In this paper, we have investigated the effect of elementwise powers of the effective resistance matrix of a graph, quantified by the graph property $\gamma^\ast$. Using this theory, we have proposed an estimator for graph Laplacian estimation and evaluated it using a simple sparsification approach. The main features of this approach are its simplicity, which leads to low computational cost in the small-to-moderate regime, and its independence of initialization and convergence criteria. We therefore suggest using the method for problems on graphs with up to a few thousand nodes when computational resources are limited, or as an initialization for other approaches.

Future work includes statistical analysis of our approach and evaluation on real data under model mismatch. It will also be meaningful to evaluate the method on ground-truth graphs with negative weights and further characterize $\gamma^\ast$.

\newpage
\IEEEtriggeratref{14}
\bibliographystyle{IEEEtran}
\bibliography{bib}

\newpage

\pagestyle{empty}

\section{Supplementary material}

\subsection{Proof of Lemma~\ref{thm:increasing}}

Consider an eigendecomposition of the matrix $B$ in Algorithm~\ref{alg:mds}, resulting in
\begin{equation*}
\begin{split}
    B_{ij} & = \sum_{k}\lambda_kv_{ik}v_{jk}\\
    & =\sum_{k:\lambda_k\geq0}\lambda_kv_{ik}v_{jk} + \sum_{k:\lambda_k<0}\lambda_kv_{ik}v_{jk}\\
    & =G_{ij} + \sum_{k:\lambda_k<0}\lambda_kv_{ik}v_{jk},
\end{split}
\end{equation*}
where $\lambda_k$ is the k'th eigenvalue and $v_{ik}$ is the i-th element of the k-th eigenvector. The last equality holds because $G_{ij}$, by construction, only contains the nonnegative eigenvalues. Substituting this into the corresponding formulas for the distance matrices, we get
\begin{equation*}
    D_{ij} = G_{ii} + G_{jj} - 2G_{ij},
\end{equation*}
and
\begin{equation*}
\begin{split}
    Y_{ij} & = B_{ii} + B_{jj} - 2B_{ij}\\
    & = G_{ii} + G_{jj} - 2G_{ij} + \sum_{k:\lambda_k<0}\lambda_k(v_{ik}^2+v_{jk}^2 - 2v_{ik}v_{jk})\\
    & =D_{ij} + \sum_{k:\lambda_k<0}\lambda_k(v_{ik}-v_{jk})^2,
\end{split}
\end{equation*}
where the first equality holds only if $Y$ is symmetric and has exclusively zeros on the diagonal. Since the last term is less than or equal to zero, it holds that $D_{ij}-Y_{ij}\geq0$ for all $i,j$.\qed

\subsection{Proof of Proposition~\ref{thm:main}}

We start by showing that $\hat{R}$ in Algorithm~\ref{alg:dsle} contains no off-diagonal zeros. Since columns in $X$ are assumed to be centered, each element in $\hat{R}$ is given by
\begin{equation*}
\begin{aligned}
    \hat R_{ij}
    & = S_{ii}+S_{jj}-2S_{ij} \\
    & = \frac{1}{n}X_{:,i}^\top X_{:,i} + \frac{1}{n}X_{:,j}^\top X_{:,j} - \frac{2}{n}X_{:,i}^\top X_{:,j} \\
    & = \frac{1}{n}\left(\|X_{:,i}\|_2^2 + \|X_{:,j}\|_2^2 - 2X_{:,i}^\top X_{:,j}\right) \\
    & = \frac{1}{n}\|X_{:,i}-X_{:,j}\|_2^2.
\end{aligned}
\end{equation*}
Thus, if there are no two identical columns in $X$, there are no off-diagonal zeros in $\hat{R}$, and therefore also in $\hat{R}^{\circ\gamma}$.

Next, $\text{MDS}_+(\hat{R}^{\circ\gamma})$ is an EDM by construction from Algorithm~\ref{alg:mds}. By Lemma~\ref{thm:increasing}, no element in $\text{MDS}_+(Y)$ can be smaller than the corresponding element in $Y$. Hence, there are also no off-diagonal zeros in $\text{MDS}_+(\hat{R}^{\circ\gamma})$, and $\text{MDS}_+(\hat{R}^{\circ\gamma})$ is thus made up of $p$ distinct points. Since $\gamma>1$ we have $0<1/\gamma<1$, and thus we can apply Lemma~\ref{thm:nons}.

We want to show that $J\bar{R}J$, with $\bar{R}$ in Algorithm~\ref{alg:dsle}, has rank $p-1$. By Lemma~\ref{thm:nons} it holds that $x^\top\bar{R}x<0$ for all $x\neq 0$ such that $x^\top\mathbf{1}=0$. Also, $Jx=x$ for all such $x$, which gives
\begin{equation*}
    x^\top(J\bar{R}J)x=x^\top\bar{R}x<0,\text{ }\forall x\neq 0,x^\top\mathbf{1}=0.
\end{equation*}
This means that the quadratic form for $J\bar{R}J$ and $\bar{R}$ is the same on $\mathbf{1}^\bot$, and from before we know that $\bar{R}$ is nonsingular, so $\text{rank}(J\bar{R}J)=p-1$. The matrix $-\frac{1}{2}J\bar{R}J$ is thereby symmetric PSD with null-space $\text{span}\{\mathbf{1}\}$, which means that, as for example in~\cite{fontan2023}, we can compute the pseudoinverse by
\begin{equation*}
    (-\frac{1}{2}J\bar{R}J)^\dagger = (-\frac{1}{2}J\bar{R}J+\frac{1}{p}\mathbf{1}\mathbf{1}^\top)^{-1}-\frac{1}{p}\mathbf{1}\mathbf{1}^\top.
\end{equation*}
Thus, $\bar{L}$ is positive semidefinite, has rank $p-1$ and, by construction using double centering, $\bar{L}\mathbf{1}=0$. The matrix $\bar{L}$ is thereby a signed Laplacian of a connected graph.\qed

\subsection{Derivation of the scaling factor used in Algorithm~\ref{alg:spars}}

Denote by $\bar{L}$ an estimated Laplacian. Then, the sought scaling factor $\alpha>0$ that minimizes the negative log-likelihood is given by
\begin{equation*}
    \argmin_\alpha f(\alpha),\quad f(\alpha) = \text{Tr}(\alpha\bar{L}S) - \text{log}|\alpha\bar{L}|_+,
\end{equation*}
where $\text{Tr}(\cdot)$ and $|\cdot|_+$ denote the trace and pseudo-determinant, respectively, just as in the MLE formulation in~\cite{medvedovsky2024}. Since the rank of $\bar{L}$ is $p-1$, we can rewrite and minimize $f(\alpha)$ according to
\begin{equation*}
    \begin{aligned}
        & f(\alpha) = \alpha \text{Tr}(\bar{L}S) - (p-1)\text{log}(\alpha) - \text{log}|\bar{L}|_+\\
        \implies & f'(\alpha)=0\quad\iff\quad\text{Tr}(\bar{L}S)-\frac{p-1}{\alpha}=0,
    \end{aligned}
\end{equation*}
resulting in the scaling factor used in Algorithm~\ref{alg:spars}.

\subsection{Extended sensitivity analysis for the choice of $\gamma$}

To further evaluate the sensitivity to the parameter~$\gamma$ in Algorithm~\ref{alg:dsle}, we provide $n/p$-plots for FS and RE for $\gamma\in\{1.1, 1.25, 1.5, 1.75, 2.0, 3.0\}$ in Fig.~\ref{fig:results3}. These were generated in the same fashion as the plots in Fig.~\ref{fig:results2}, but now only comparing DSLE for different choices of $\gamma$. This shows that, in our experimental setup, performance degrades significantly only in the extreme cases $\gamma=1.1$ and $\gamma=3.0$, regardless of the value of $n/p$. Also, the performance was already shown to be stable over the different graph topologies in Fig.~\ref{fig:results} for the same values of $\gamma$, i.e., except for $\gamma=1.1$ and $\gamma=3.0$.

\begin{figure}[H]
    \centering
    \includegraphics[width=.99\linewidth]{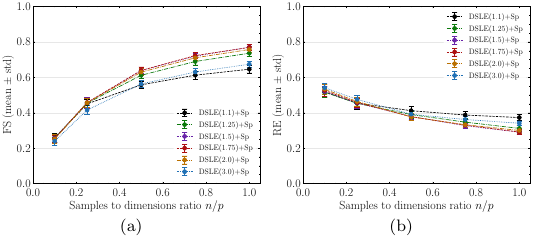}
    \caption{(a) FS and (b) RE for the best solution along each path depending on $n/p$ for varying $\gamma$.}
    \label{fig:results3}
\end{figure}

\end{document}